\newtheorem{definition}{Definition}[section]
\newtheorem{Lemma}{Lemma}[section]
\newtheorem{theorem}{Theorem}[section]
\newtheorem{remark}{Remark}[section]
\newtheorem{example}{Example}[section]
\newcommand{\bmat}{\left[ \begin{array}}
\newcommand{\emat}{\end{array} \right]}
\newcommand{\F}{{\mathbb F}}
\newcommand{\N}{\mathbb{N}}
\newcommand{\C}{{\cal C}}
\newcommand{\im}{\textnormal{Im }}
\title{\LARGE \bf
Erasure codes with simplex locality
}
\author{Margreta Kuijper$^{1}$ and Diego Napp$^{2}$
\thanks{*D. Napp's research  has been supported by the Spanish grant DPI2012-31509 and by a ``Juan de la Cierva'' grant (JCI-2010-06268).}
\thanks{$^{1}$Margreta Kuijper is with Faculty of Electrical and Electronic Engineering of the University of Melbourne, Australia
        {\tt\small mkuijper@unimelb.edu.au}}%
\thanks{$^{2}$Diego Napp is with the Department of Mathematics, University Jaume I,
        Castellon, Spain
        {\tt\small napp@uji.es}}%
}
\begin{document}

\maketitle
\thispagestyle{empty}
\pagestyle{empty}

\begin{abstract}

We focus on erasure codes for distributed storage. The distributed storage setting imposes locality requirements because of "easy repair" demands on the decoder. We first establish the characterization of various locality properties in terms of the generator matrix of the code. These lead to bounds on locality and notions of optimality. We then examine the locality properties of a family of nonbinary codes with simplex structure. We investigate their optimality and design several "easy repair" decoding methods. In particular, we show that any correctable erasure pattern can be solved by easy repair.

\end{abstract}

\section{Introduction}

Several classical coding techniques (such as Reed-Solomon erasure codes) are
extensively used for data storage, most successfully applied to storage in RAID systems and magnetic recording
(see \cite{Blaum}). However, due to the fast-growing demand for large-scale data storage, it would be impossible or extremely expensive to build single pieces of
hardware with enough storage capabilities to store the enormous volume of data that is being
generated. Hence, new classes of storage technology have emerged using the idea of
distributing data across multiple nodes which are interconnected over a network,
as we witness in some peer-to-peer (P2P) storage systems \cite{Wuala} and data centers \cite{Azure} that comprise
the backbone infrastructure of cloud computing. We call such systems Networked Distributed
Storage Systems (NDSS).

A fundamental issue that arises in this context is the so-called \emph{Repair Problem}: how to maintain the
encoded data when failures (node erasures) occur. When a storage node fails, information that was stored in the node is no longer accessible. As a remedy a node is then added to the
system to replace the failed node. The added node downloads data from a set of appropriate and accessible nodes to recover the information stored in the failed node. This is called {\em node repair}. To
assess the performance of this repair process, there are several metrics that can be considered:
{\em storage cost}, measured as the amount of data stored in the node, {\em repair bandwidth}, measured as the total
number of bits communicated in the network for each repair and {\em locality}, measured as the number of
nodes needed for each repair. For instance, $(n,k)$ maximum distance separable (MDS) codes are
optimal in terms of storage cost since any $k$ nodes contain the minimum amount
of information required to recover the original data. However, to repair one single node it is
necessary to retrieve information from all $k$ nodes. More specifically, repair is achieved by re-encoding the information from these $k$ nodes and storing part of the re-encoded data in the new node.
This results in a poor performance with respect to repair bandwidth as well as locality.
%

Currently the most well-understood metrics are the repair bandwidth metric and the storage cost metric, see for example~\cite{Dimakis1,Gaston}. Using network coding techniques, several code constructions have been presented that show optimality with respect to repair bandwidth and storage cost, see \cite{Dimakis2} and
references therein. In contrast, locality is an important metric that has received less attention in the literature. This metric was studied independently by several authors, see \cite{Oggier1}, \cite{Gopalan} and \cite{Dimakis3} among others,
and it is considered to be one of the main repair performance bottlenecks in many NDSS, e.g., in
cloud storage applications.

\begin{definition}
An $(n,k)$ code has {\em locality} $r$ if every codeword symbol in a codeword is a linear combination of at most $r$ other symbols in the codeword.
\end{definition}

Thus, when a code of locality $r$ is used then one needs to contact at most $r$ nodes to
repair one node. In the recent paper~\cite{Gopalan} (see also~\cite{TamoBarg}) it was shown that there exists a natural trade-off among
redundancy, locality and code minimum distance:
\begin{theorem}
Let $C$ be an $(n,k)$ linear code with minimum distance $d$ and locality $r$. Then
$$
n-k +1 - d \geq \left\lfloor \frac{k-1}{r} \right\rfloor .
$$
\end{theorem}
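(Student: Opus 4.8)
The plan is to prove this bound, often called the Singleton-type bound for locally repairable codes, by building up a large set of coordinates that must be "covered cheaply" by locality and then applying the Singleton bound to the punctured code on the complementary coordinates. The inequality rearranges to $k \le d - 1 + (n - d + 1) - \lceil k/r \rceil$ type statements, but the cleanest route is to exhibit a subset $S$ of coordinates whose size I can control and for which the code restricted to $S$ still has large dimension while living in few coordinates.

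First I would use the locality hypothesis to construct a set of coordinates greedily. Each codeword symbol lies in a linear combination of at most $r$ others, so for each coordinate there is a "repair group" of size at most $r+1$ (the symbol together with the $\le r$ symbols that determine it). The key combinatorial step is to repeatedly pick repair groups to assemble a coordinate set $S$ such that the restriction (puncturing) of $C$ to $S$ has dimension at least $k - \lfloor (k-1)/r \rfloor$ while $|S|$ stays small — roughly, each group of size $\le r+1$ contributes at most $r$ to the dimension but $r+1$ to the coordinate count, so the "redundancy" accumulates. I would formalize this by induction: after $i$ steps the accumulated coordinate set has dimension at least $ir$ (or the full dimension $k$ if we stop), and this forces enough parity structure.

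Then I would invoke the standard Singleton argument. Let $C'$ be the code obtained by considering the coordinates outside $S$; the minimum distance $d$ means that deleting any $d-1$ coordinates keeps the dimension equal to $k$. Combining the dimension count on $S$ with the Singleton bound on the complement yields $d \le n - k + 1 - (\lceil k/r \rceil - 1) = n-k+1-\lfloor (k-1)/r\rfloor$, which is exactly the claimed inequality after rearrangement. I would be careful with the floor: the bound uses $\lfloor (k-1)/r \rfloor$, so I must track the off-by-one in the number of repair groups needed, using $\lceil k/r \rceil - 1 = \lfloor (k-1)/r \rfloor$.

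The hard part will be the greedy construction of $S$ and the precise dimension bookkeeping — specifically guaranteeing that each newly added repair group strictly increases the dimension of the punctured-on-$S$ code by the right amount while the locality constraint keeps the coordinate count in check. One must handle overlaps between repair groups carefully so that the accumulated dimension deficit is exactly $\lfloor (k-1)/r \rfloor$ and not less. Once that set-building lemma is in hand, the remainder is the routine Singleton estimate, so I expect essentially all the difficulty to reside in the combinatorial covering argument rather than in any algebraic manipulation.
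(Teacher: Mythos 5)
You have the right raw ingredients --- repair groups of size at most $r+1$ that span dimension at most $r$, a greedy accumulation of such groups, and a Singleton-type estimate at the end --- and this is indeed the circle of ideas behind the known proofs (your iterative construction is the Gopalan--Huang--Simitci--Yekhanin version, whereas the paper uses the Tamo--Barg one-shot version with ``leaders'' and ``friends''). But the key lemma you propose to prove has its inequalities pointing the wrong way, and as stated it cannot imply the theorem. You ask for a set $S$ such that $\dim C|_S \ge k - \lfloor (k-1)/r\rfloor$ while $|S|$ stays \emph{small}; that condition is vacuous: any $k-\lfloor(k-1)/r\rfloor$ linearly independent columns of $G$ satisfy it (with $|S| = \dim C|_S$), and from such a set nothing beyond the plain Singleton bound $d \le n-k+1$ follows. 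The same inversion appears in your induction invariant ``after $i$ steps the accumulated set has dimension at least $ir$'': each group contributes at \emph{most} $r$ to the dimension; a lower bound of this kind is neither achievable in general (groups overlap) nor relevant to the conclusion.

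What the argument actually requires is an upper bound on rank relative to size, i.e.\ accumulated \emph{redundancy}: a set $S$ with $|S| - \mathrm{rank}(G_S) \ge \lfloor (k-1)/r\rfloor$ and $\mathrm{rank}(G_S) \le k-1$. With that, your ``Singleton on the complement'' step does close the proof: the code shortened at $S$ has length $n - |S|$, dimension $k - \mathrm{rank}(G_S) \ge 1$, and minimum distance at least $d$, so Singleton gives $d \le (n-|S|) - (k - \mathrm{rank}(G_S)) + 1 \le n-k+1-\lfloor(k-1)/r\rfloor$. The redundancy invariant is maintained by doing the greedy step correctly: at each step pick a coordinate \emph{not} in the span of the current $S$ and adjoin its repair group; since that coordinate lies in the span of the at most $r$ other members of its group, the size grows by at least one more than the rank, so redundancy increases by at least $1$ per step, while rank increases by at most $r$. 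Because $\lfloor (k-1)/r\rfloor \cdot r \le k-1$, the rank stays at most $k-1$ throughout, so the process runs for the full $\lfloor(k-1)/r\rfloor$ steps and the ``overlap'' difficulties you worry about never require a separate case analysis. Equivalently --- and this is the paper's route --- one can aim for a \emph{large} rank-deficient set, $\mathrm{rank}(G_S)\le k-1$ with $|S| \ge k-1+\lfloor(k-1)/r\rfloor$, and then invoke the fact you quote, that any rank-deficient column set has at most $n-d$ columns. Either way, the heart of the proof is exactly the redundancy bookkeeping you deferred, and with your current formulation of the lemma it would not go through.
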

\vspace{.2cm}
\begin{proof}(from~\cite{TamoBarg}; see also~\cite{Gopalan})
Let $G$ be the generator matrix of $C$. Choose any $\left\lfloor \frac{k-1}{r} \right\rfloor $ nodes of the code---call these the "leaders". Each leader can be written as a linear combination of at most $r$ other nodes---call this set the "set of friends of the leader". Now define $N$ as the set of nodes which is the union of all sets of friends of the leaders but without the leaders themselves. Then clearly $N$ has less than $k$ elements so that the set of columns in $G$ that corresponds to $N$ spans a space of rank $< k$. Since $G$ has full rank it is possible to enlarge $N$ to a set $N'$ of $\geq k-1$ columns such that the rank of its corresponding columns equals exactly $k-1$. Note that because the code has locality $r$, this enlargement operation can be done without involving any of the leaders. Now define $U$ as the union of $N'$ and the set of leaders. Then $U$ has at least $k-1+\left\lfloor \frac{k-1}{r} \right\rfloor$ nodes but still, because the code has locality $r$, the corresponding columns in $G$ span a space of dimension $<k$. By definition of the minimum distance, all ($k\times \cdot$)-submatrices of $G$ that have rank $<k$ must have $\leq n-d$ columns. It therefore follows that $k-1+\left\lfloor \frac{k-1}{r} \right\rfloor \leq n-d$ which proves the theorem.
\end{proof}
 From the above bound it is seen that MDS codes do not perform well with respect to locality. Indeed, since $d=n-k+1$ they have only trivial locality $r=k$. In contrast, non-MDS codes such as Pyramid codes and Hierarchical codes have been shown to be optimal with respect to the above bound. For these codes the gap $n-k +1 - d$ is nonzero due to the fact that they are not MDS. In a sense these codes optimally "use" this gap for locality purposes. More generally, a new class of codes, called locally repairable codes (LRC)~\cite{Dimakis3,Dimakis4,Oggier1,KamPra}, addresses the repair problem focusing on minimizing the number of nodes contacted during the repair process.

However, the issue of locality for {\em multiple} node erasures is less well researched.
More specifically, the main problem with the existing locally repairable codes is that although they minimize the number of contacted nodes for the case that only one node has failed, they suffer from the drawback that it is not known how many nodes are needed when several failures occur. This can be due to, for instance, the fact that in these constructions only a single subset of nodes can repair a particular piece of redundant data and therefore if a node from this repair subset is also not available, data cannot be repaired "locally", increasing the cost of the repair. Hence, it is desirable to obtain codes providing multiple repair alternatives. Some interesting preliminary results on this problem have been recently presented in \cite{PraKam}, seeking to extend the ideas in \cite{Gopalan}, and in \cite{Pamies} using partial geometry. It is also worth mentioning the results in \cite{Oggier1,Oggier2} where some code schemes, akin to the one presented here, are introduced.

\begin{definition}
Let $\delta$ be a positive integer and $C$ be an $(n,k)$ code with erasure correcting capability $\leq \delta$. Then $C$ has {\em $\delta$-locality} $r$ if, for any erasure pattern with $\leq \delta$ erased symbols, every erased codeword symbol is a linear combination of at most $r$ live symbols in the codeword.
\end{definition}

In this paper, we present a coding scheme based on \emph{simplex codes} especially suitable when locality is relevant and multiple failures may occur. We show how it is always possible to repair each node by simply adding two live nodes, even in the presence of multiple erasures.

\section{Preliminaries}

Let $q=2^m$ and let $\F_q$ be a finite field with $q$ elements. A $q$-ary linear $(n,k)$-code $\C$ of length $n$ and rank $k$ is a $k$-dimensional linear subspace of $\F_q^n$. Full-rank matrices $G\in \mathbb{F}_q^{k \times n}$ and $H \in \mathbb{F}_q^{n \times (n-k)}$ with the property that
\begin{eqnarray*}
{\mathcal C}  &=& \im_{\F_q}G = \left\{ c= u G \in \mathbb F_q^{n}:\, u \in \mathbb F_q^{k}\right\} \\
&=& \ker_{\F_q}H = \left\{ c  \in \mathbb F_q^{n}:\, Hc^\top=0 \right\},
\end{eqnarray*}
are called \textit{generator matrix} and \textit{parity-check matrix}, respectively.

\vspace{.4cm}

\begin{definition}
Let $k$ be a positive integer, $n= 2^k -1$ and let $G$ be a $k  \times  n $ matrix whose columns are the distinct non-zero vectors of $\F^k_2$. Let $C$ be the binary code over $\F_2$ that has $G$ as its generator matrix. Then $C$ is called a binary \textit{simplex} $(n,k)$-code.
\end{definition}

\vspace{.4cm}

Binary simplex codes are classical codes with minimum distance $d = 2^{k-1}$ and they are dual to the binary Hamming codes.


\vspace{.4cm}

\begin{example}\label{ex:01}
Let $k=3$ and $\C\subset \F_2^7$ be a simplex $(7,3)$-code. Then, its generator matrix $G$ and parity-check matrix $H$ are given by;
$$
G= \left(
     \begin{array}{ccccccc}
       1 & 0 & 0 & 1 & 1 & 0 & 1 \\
       0 & 1 & 0 & 1 & 0 & 1 & 1 \\
       0 & 0 & 1 & 0 & 1 & 1 & 1 \\
     \end{array}
   \right)
$$
and
$$ \ H = \left(
                    \begin{array}{ccccccc}
                       1 & 1 & 0&1 & 0 & 0 & 0\\
                       1 & 0 & 1&0 & 1 & 0 & 0\\
                       0 & 1 & 1&0 & 0 & 1 & 0\\
                       1 & 1 & 1&0 & 0 & 0 & 1\\
                    \end{array}
                  \right),
$$
respectively.
\end{example}

\vspace{.4cm}

When coding is used in distributed storage systems, a data object or file $u=(u_1,u_2,\dots,u_k)\in \F_q^k$ of $k$ symbols is redundantly stored across $n$ different nodes in $c=(c_1,c_2,\dots,c_n)= u G\in \F_q^n$, where $G$ is the generator matrix of an $(n,k)$-code $\C$.

Let $S= \{c_1,\dots,c_n  \}$ be the set of nodes, $S^e\subset S$ the set of erased notes, $S^\ell\subset S$ the set of live nodes and $S_i^e= \{ i \ | \ c_i \in S^e \}$ and $S_i^\ell= \{ i \ | \ c_i \in S^\ell \}$  the indices of the erased and live nodes, respectively. A node $c_i$ is said to be \emph{related} to the pair $(c_j,c_k)$ if $c_i=c_j + c_k$, $i,j,k\in \{1,\dots,n \}$. Two pairs of nodes are said to be \emph{disjoint} if they do not share a common node. If $c_i\in S^e$ is related to the pair $(c_j,c_k)$ where $c_j,c_k \in S^\ell$, then it is said that $c_i$ \emph{allows for easy repair}.

In terms of computational complexity, this implies that the cost of a node reconstruction is that of a simple addition of two nodes.

Note that if $c_i$ is related to the pair $(c_j,c_k)$, then $c_j$ is related to $(c_i,c_k)$ and $c_k$ is related to $(c_j,c_i)$.

The following lemma is useful for the sequel of the paper.
\begin{Lemma}\label{le:aux}
Let $C$ be a linear $(n,k)$ code and let $S_i^e$ denote the set of indices of erased nodes. Denote $|S_i^e| = n-s$. Then the following statements are equivalent:
\begin{enumerate}
  \item The erasure pattern corresponding to $S_i^e$ is correctable;
  \item The $k \times s$ matrix $\hat G$ formed by deleting the $i$-th columns of $G$ where $i\in S_i^e$, is right invertible;
  \item The $(n-k) \times (n-s)$ matrix $\hat H$ formed by the $i$-th columns of $H$ where $i\in S_i^e$, is left invertible.
\end{enumerate}
\end{Lemma}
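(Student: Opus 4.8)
The plan is to route both equivalences through a single structural characterization of correctability: the erasure pattern indexed by $S_i^e$ is correctable if and only if the zero codeword is the only codeword whose support lies entirely inside $S_i^e$. First I would justify this characterization. If two codewords $c,c'$ agree on all live positions, then $c-c'$ is a codeword supported on $S_i^e$; conversely, any codeword supported on $S_i^e$ vanishes on every live position. Hence recovery from the live symbols is unambiguous exactly when no nonzero codeword is concealed on the erased positions.

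For the equivalence (1) $\Leftrightarrow$ (2) I would work with the encoding map $u \mapsto uG$. Writing $c$ as a row vector, the vector of live symbols equals $u\hat G$, where $\hat G$ collects the $s$ live columns of $G$. Correctability is precisely injectivity of $u \mapsto u\hat G$ on $\mathbb{F}_q^k$, which is equivalent to $\hat G$ having full row rank $k$; for a $k \times s$ matrix this is the same as being right invertible. I would also note that a right inverse $R$ yields the explicit decoder $u = (u\hat G)R$, so this direction closes constructively.

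For (1) $\Leftrightarrow$ (3) I would instead test supports against the parity-check matrix. A vector $e$ supported on $S_i^e$ satisfies $He^\top = \hat H v^\top$, where $\hat H$ collects the erased columns of $H$ and $v \in \mathbb{F}_q^{n-s}$ lists the entries of $e$ on $S_i^e$. Thus a nonzero codeword supported on $S_i^e$ exists if and only if $\hat H v^\top = 0$ has a nonzero solution, i.e. $\hat H$ fails to have full column rank $n-s$. Negating, correctability is equivalent to $\hat H$ having full column rank, which for an $(n-k)\times(n-s)$ matrix is exactly left invertibility. In both cases the rank conditions force $s \geq k$, so the stated matrix shapes are consistent.

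I expect the only real care points to be bookkeeping rather than any genuine obstacle: keeping the row-vector/right-multiplication convention straight, distinguishing the two column operations (deleting the erased columns of $G$ versus selecting the erased columns of $H$), and verifying that full rank translates into the correct one-sided invertibility for the given shapes. The entire conceptual content is carried by the support characterization of the first paragraph.
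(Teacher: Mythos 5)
Your proposal is correct and follows essentially the same route as the paper: both establish the support characterization (no nonzero codeword concealed on the erased positions, using linearity), then identify correctability with full row rank of $\hat G$ for the generator-matrix statement and with full column rank of $\hat H$ (linear independence of its columns) for the parity-check statement. If anything, your phrasing of (1) $\Leftrightarrow$ (2) via injectivity of $u \mapsto u\hat G$ is slightly sharper than the paper's appeal to ``solvability,'' since the live-symbol system always has the true $u$ as a solution and uniqueness is the real issue, but this is a refinement of the same argument rather than a different one.
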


\vspace{.2cm}

\begin{proof}
Denote the set of indices of the live (=non-erased) nodes by $S_i^\ell$. Clearly 1) holds if and only if there do not exist two different codewords that coincide in positions corresponding to $S_i^\ell$. Since the code is linear this is equivalent to the non-existence of a nonzero codeword whose symbols at positions in $S_i^\ell$ are zero. The latter is clearly equivalent to the linear independence of the columns of $\hat H$. Next, we prove the equivalence of 1) and 2). Write $S_i^\ell = \{ j_1 , \ldots , j_s\}$. Consider the system of equations
 \[
 \bmat{ccc} c_{j_1} & \cdots & c_{j_s}\emat= u \hat G .
   \]
 The solvability of this system is equivalent to the recovery of $u$ and the repair of all erasures. The equivalence of 1) and 2) now follows from the fact that this system is solvable for any erasure pattern that corresponds to $S_i^e$ if and only if $\hat G$ is right invertible.

\end{proof}

\section{Simplex locality}

In this section we propose a nonbinary simplex code, defined as follows:

Let $G\in \F_2^{k \times n}$ and $H\in \F_2^{n \times (n-k)}$ be the generator matrix and parity-check matrix of a binary simplex $(n,k)$-code over $\F_2$. Via this generator matrix $G$ we encode the data to be stored $u\in \F_q^k$ to
\begin{equation}\label{eq:01}
    c=(c_1,c_2,\dots,c_n)= u G\in \F_q^n,
\end{equation}
with $q=2^m$ for some $m\in \N$. The resulting code is an $(n,k)$ code over $F_q$ that we call a {\em simplex code over $F_q$}.

It is easy to see that these codes inherit their distance property from the binary simplex codes, namely $d = 2^{k-1}$. The codes also possess several good locality properties, starting with the next lemma which is based on a wellknown property of the binary simplex code.

\begin{Lemma}\label{lem:01}
Let $C$ be an $(n,k)$ simplex code over $F_q$. Denote its set of nodes by $S= \{c_1,\dots,c_n  \}$. Then, each node $c_i\in S$, $i\in \{ 1,\dots, n \}$ is related to $\frac{n-1}{2}$ different pairs.
\end{Lemma}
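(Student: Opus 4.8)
The plan is to translate the statement about the nodes $c_i$ into a purely combinatorial statement about the columns of $G$, and then count. Writing $g_i$ for the $i$-th column of $G$, we have $c_i = u\, g_i$ for the stored data $u\in\F_q^k$. Since a repair relation must hold regardless of which data is stored, the assertion ``$c_i$ is related to $(c_j,c_k)$'' should be read as the identity $c_i = c_j + c_k$ holding for all $u$, which by linearity is equivalent to the column identity $g_i = g_j + g_k$ in $\F_2^k$. Working in characteristic $2$ there is no distinction between sums and differences, so this is the same as the symmetric relation $g_i + g_j + g_k = 0$, consistent with the remark preceding the lemma. Thus the first step is to make this reduction precise, using that the columns of $G$ are by definition exactly the $n = 2^k-1$ distinct nonzero vectors of $\F_2^k$.

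The second step is the counting itself, carried out in the additive group $\F_2^k$. Fix the nonzero column $g_i$ and count unordered pairs $\{g_j,g_k\}$ of distinct nonzero columns with $g_j + g_k = g_i$. For each admissible first entry $g_j$ the partner is forced to be $g_k = g_i + g_j$, so I would count admissible $g_j$ and then divide by $2$. The constraints are $g_j \neq 0$, $g_k \neq 0$, and $g_j \neq g_k$. Here $g_k \neq 0$ is equivalent to $g_j \neq g_i$, while $g_j = g_k$ would force $g_i = 0$ and is therefore automatically excluded because $g_i$ is nonzero. Hence there are exactly $(2^k-1)-1 = 2^k - 2$ admissible ordered choices of $g_j$, giving $(2^k-2)/2 = 2^{k-1}-1 = \frac{n-1}{2}$ unordered pairs.

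The counting is elementary, so the only real point to get right is the reduction in the first step: one must justify that the node relation is equivalent to the column relation and that distinct column pairs genuinely yield distinct pairs of nodes, the latter following at once from the columns being pairwise distinct nonzero vectors. I expect this interpretive step---rather than any computation---to be the main thing a careful reader would want spelled out; once it is settled, the result is an immediate consequence of the abelian-group structure of $\F_2^k$ and reflects the classical pairing property of the binary simplex code on which the lemma is based.
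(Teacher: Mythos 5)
Your proof is correct, and the counting is airtight: with $g_i \neq 0$ fixed, the admissible choices of $g_j$ are exactly the $2^k-2$ nonzero vectors different from $g_i$, each unordered pair being counted twice, giving $2^{k-1}-1 = \frac{n-1}{2}$ pairs. Your route does differ from the paper's in packaging. The paper argues by a greedy iterative construction: starting from $\hat c = c_i$ it repeatedly picks an as-yet-unused node $c_j$, forms its forced partner $c_j + \hat c$ (using the closure property that the sum of any two columns of $G$ is again a column), and adds both to the used set; after $\frac{n-1}{2}$ steps the $n-1$ remaining nodes are exhausted, yielding $\frac{n-1}{2}$ pairwise \emph{disjoint} pairs. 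You replace this iteration by a direct count, and you additionally make explicit the interpretive reduction from the node relation $c_i = c_j + c_k$ to the column relation $g_i = g_j + g_k$ in $\F_2^k$, a point the paper leaves implicit. What the paper's formulation buys is that disjointness is established up front: disjointness, not mere distinctness, is the property actually invoked later in the proof of the theorem on $\frac{n-1}{2}$-locality (at most $\frac{n-1}{2}-1$ other erased nodes can each spoil at most one pair, so some pair consists of two live nodes). Your argument delivers disjointness as well---since the partner of $g_j$ is forced to be $g_i + g_j$, no node can occur in two different pairs related to $c_i$, so your $\frac{n-1}{2}$ pairs in fact partition $S \setminus \{c_i\}$---but you should state this consequence explicitly if the lemma is to be used as the paper uses it.
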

\begin{proof}
Choose any node, say $\hat c\in S$, and let $\hat S :=\{ \hat c\}$. Take $c_{1_j}\in S \setminus\hat S$ and set $c_{1_k} = c_{1_j} + \hat c$. Due to the fact that any sum of two columns of $G$ is another column of $G$ we have that for any $c_j,c_k\in S$, $c_j+c_k \in S$. Hence, $c_{1_k}\in S$ and let $\hat S := \hat S \bigcup \{ c_{1_j}, c_{1_k}\}$. Again take any node $c_{2_j}\in S \setminus\hat S$ and set $c_{2_k}= c_{2_j} + \hat c \in S$. Note that $c_{2_k}\notin \hat S$ and therefore the pairs $(c_{1_j},c_{1_k})$ and $(c_{2_j},c_{2_k})$ are disjoint. Let $\hat S := \hat S \bigcup \{ c_{2_j}, c_{2_k}\}$ and the cardinality of the set $\hat S$ is increased by two in each step. Repeating this process $\frac{n-1}{2}$ times, we obtain $\frac{n-1}{2}$ disjoint pairs related to $\hat c$. Since the choice of $\hat c$ is arbitrary, this concludes the proof.
\end{proof}

It follows from the above lemma that a simplex code over $F_q$ has locality $2$. Therefore any single erasure pattern allows for easy repair. The next theorem, reminiscent of Corollary 3 in~\cite{Oggier1}, shows that this is also true for multiple erasure patterns that are within the code's erasure correcting capability.

\begin{theorem}
Let $C$ be an $(n,k)$ simplex code over $F_q$. Denote the set of erased nodes by $S^e$. If $|S^e| \leq \frac{n-1}{2}$, then \emph{all} the nodes in $S^e$ allow for easy repair. Thus, the $\frac{n-1}{2}$-locality of $C$ equals $2$.
\end{theorem}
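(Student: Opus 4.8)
The plan is to fix an arbitrary erased node $c_i \in S^e$, exhibit a single pair of live nodes that repairs it, and then observe that the argument applies uniformly to every element of $S^e$. The engine is Lemma \ref{lem:01}: applied with $\hat c = c_i$, it produces $\frac{n-1}{2}$ pairwise disjoint pairs, each summing to $c_i$, none of which contains $c_i$ itself. The first thing I would point out is that these pairs are not merely disjoint but actually exhaust the remaining nodes: they involve $2 \cdot \frac{n-1}{2} = n-1$ distinct nodes, which is exactly the cardinality of $S \setminus \{c_i\}$. Hence the $\frac{n-1}{2}$ pairs form a partition of $S \setminus \{c_i\}$, and each of them is a candidate for easy repair of $c_i$.

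Next I would count how many of these pairs can be spoiled by erasures. A pair $(c_j,c_k)$ fails to furnish easy repair for $c_i$ precisely when at least one of $c_j,c_k$ is itself erased. Because the pairs are disjoint, each erased node other than $c_i$ lies in exactly one of them and therefore spoils at most one pair. The erased nodes other than $c_i$ number $|S^e| - 1 \leq \frac{n-1}{2} - 1$, so at most $\frac{n-1}{2} - 1$ of the $\frac{n-1}{2}$ pairs can be spoiled.

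Since the number of spoiled pairs is strictly smaller than the total number of pairs, at least one pair $(c_j,c_k)$ survives with $c_j,c_k \in S^\ell$. By the definition of \emph{related}, $c_i = c_j + c_k$, so $c_i$ allows for easy repair. As $c_i$ was an arbitrary member of $S^e$, every erased node is repairable in this way; and since $\frac{n-1}{2} = 2^{k-1} - 1 = d-1$ coincides with the erasure-correcting capability, this simultaneously establishes that the $\frac{n-1}{2}$-locality of $C$ equals $2$.

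The step I expect to be the crux is the observation that the disjoint pairs of Lemma \ref{lem:01} \emph{partition} $S \setminus \{c_i\}$; once that is in hand the pigeonhole count is immediate, with the strict inequality $|S^e| - 1 < \frac{n-1}{2}$ doing the remaining work. No delicate estimate is needed beyond matching the correcting capability $d-1$ to $\frac{n-1}{2}$.
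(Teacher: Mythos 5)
Your proposal is correct and follows essentially the same route as the paper: invoke Lemma \ref{lem:01} to get $\frac{n-1}{2}$ disjoint pairs related to the erased node, then a pigeonhole count against the at most $\frac{n-1}{2}-1$ other erasures shows some pair is entirely live. Your write-up is in fact slightly more explicit than the paper's (which counts live nodes rather than spoiled pairs), but the underlying argument is identical.
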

\begin{proof}
If $|S^e| \leq \frac{n-1}{2}$ then we have $n-|S^e| > \frac{n-1}{2}$ live nodes. By Lemma \ref{lem:01} any erased node is related to $\frac{n-1}{2}$ disjoint pairs which implies that at least one of these pairs is comprised of two live nodes, which implies easy repair.
\end{proof}

Note that the above theorem deals with erasure patterns whose number of erasures are within the erasure correcting capability of the code. We now turn our attention to the larger class of erasure patterns that are correctable and possibly have $> \frac{n-1}{2}$ erasures.

\vspace{.4cm}

\begin{Lemma}\label{exists}
Let $C$ be an $(n,k)$ simplex code over $F_q$. Denote the set of erased nodes by $S^e$. Then if $S^e$ corresponds to a correctable erasure pattern then there exists an erased node that allows for easy repair.
\end{Lemma}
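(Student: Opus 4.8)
The plan is to argue by contraposition: I would assume that \emph{no} erased node allows for easy repair and then deduce that the erasure pattern cannot be correctable. The first step is to translate correctability into a spanning condition. By Lemma~\ref{le:aux}, the pattern determined by $S^e$ is correctable precisely when the submatrix $\hat G$ of $G$ consisting of the live columns is right invertible, i.e.\ has full row rank $k$. Since every entry of $G$ lies in $\F_2 \subseteq \F_q$, the rank of $\hat G$ is the same whether computed over $\F_2$ or over $\F_q$ (the nonvanishing of a $0/1$ minor is unaffected by the embedding $\F_2 \hookrightarrow \F_q$). Hence correctability is equivalent to the live columns spanning $\F_2^k$.

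The heart of the argument exploits the defining simplex property. I would identify each node with its column $g_i \in \F_2^k$, so that the live nodes form a set $L$ and the erased nodes a set $E$ which together partition the $n = 2^k-1$ nonzero vectors of $\F_2^k$. The relation $c_i = c_j + c_k$ holds exactly when $g_i = g_j + g_k$, and, as used in Lemma~\ref{lem:01}, the sum of any two distinct columns is again a column. The key observation is then: under the hypothesis that no erased node allows easy repair, for any two distinct $g_j,g_k \in L$ the vector $g_j + g_k$ is nonzero (as $g_j \neq g_k$) and cannot lie in $E$, since otherwise that erased node would be repairable from the live pair $(g_j,g_k)$; therefore $g_j + g_k \in L$. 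Combined with the trivial identities $g+g=0$ and $0+g=g$, this shows that $L \cup \{0\}$ is closed under addition and so, over $\F_2$, is a linear subspace of $\F_2^k$.

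Finally I would conclude. Taking $S^e \neq \emptyset$ (as the lemma intends, there being a node to repair), there is at least one nonzero vector in $E$ lying outside $L \cup \{0\}$, so this subspace is \emph{proper}. Hence the live nodes span a space of dimension $< k$, and by the equivalence established in the first step the erasure pattern is \emph{not} correctable, giving the desired contrapositive. I expect the only genuine subtlety to be the rank-invariance under the extension $\F_2 \hookrightarrow \F_q$ and the bookkeeping that rules out the degenerate sum $g_j + g_k = 0$; the decisive closure observation that converts the no-easy-repair hypothesis into a subspace is short once nodes are viewed as points of the binary space $\F_2^k$.
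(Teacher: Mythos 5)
Your proof is correct, but it takes a genuinely different route from the paper's. The paper argues directly and constructively: from correctability, Lemma~\ref{le:aux} gives rank $k$ for the live submatrix $\hat G$, so any erased node $\hat c$ can be written as a sum $c_{a_1} + \cdots + c_{a_g}$ of live nodes; the proof then repeatedly folds pairs --- if $c_{a_1}+c_{a_2}$ is erased, it is itself easy-repairable; otherwise it is a live node that replaces the pair, shortening the expression --- until either an erased pair-sum appears or $\hat c$ itself is exhibited as a sum of two live nodes. Your contrapositive route instead assumes no erased node is easy-repairable, observes via the simplex closure property that the set $L \cup \{0\}$ of live columns plus zero is then closed under addition, hence an $\F_2$-subspace, and proper once $S^e \neq \emptyset$; so the live columns span a space of dimension $< k$ and Lemma~\ref{le:aux} rules out correctability. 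Each approach buys something: the paper's telescoping argument is constructive and feeds directly into the repeated-repair procedure of Algorithm~\ref{alg} and Theorem~\ref{th:01}, whereas yours is shorter and more structural, identifying the exact obstruction (live nodes plus zero forming a proper subspace) and making explicit the rank invariance under the embedding $\F_2 \hookrightarrow \F_q$ --- a point the paper uses only tacitly when it passes from $\F_q$-rank $k$ of $\hat G$ to writing erased nodes as binary sums of live nodes. You are also right to single out the degenerate cases: the exclusion of $g_j + g_k = 0$ via distinctness of the columns, and the implicit assumption $S^e \neq \emptyset$, without which the lemma's conclusion is vacuous.
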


\vspace{.2cm}

\begin{proof}
Denote again the set of live nodes by $S^\ell$; denote the set of its indices by $S_i^\ell=\{ j_1 , \ldots , j_s\}$.
Consider the system of equations
 \[
 \bmat{ccc} c_{j_1} & \cdots & c_{j_s}\emat= u \hat G ,
  \]
where $\hat G$ is the $k \times s$ matrix that remains after deleting from $G$ the $i$-th columns at erased positions. Since $S^e$ corresponds to a correctable erasure pattern, this system of equations is solvable over $\F_q$. Thus it follows from Lemma~\ref{le:aux} that $\hat G$ has rank $k$. As a result,
for all $\hat c\in S^e$ there exist an integer $g$ and $a_j\in S_i^\ell$ for $j=1, \dots, g$ such that
$$
\hat c= c_{a_1} +\dots + c_{a_g}.
$$
If $c_{a_1} + c_{a_2} \in S^e$, then we have found one erased node that is easily reparable. If not, {\em i.e.} if $c_{a_1} + c_{a_2} \in S^\ell$, then denote $c_{b_1}=c_{a_1} + c_{a_2}$ and therefore $\hat c= c_{b_1} + c_{a_3} +\dots + c_{a_g}$. Again, if $c_{b_1} + c_{a_3} \in S^e$, then we have found one erased node that is easily reparable. If not, {\em i.e.}, if $c_{b_1} + c_{a_3} \in S^\ell$, then denote $c_{b_2}=c_{b_1} + c_{a_3}$ and therefore $\hat c= c_{b_2} + c_{a_4} +\dots + c_{a_g}$. This process must end yielding either that $c_{b_j} + c_{a_{j+2}} \in S^e$ with $c_{b_j}, c_{a_{j+2}} \in S^\ell$ for some $j\in\{ 1,2, \dots, g-3\}$ or $\hat c= c_{b_{g-2}} + c_{a_g}$. In both cases we obtain an easy repair and the proof is completed.
\end{proof}

\vspace{.2cm}

In the following algorithm we present an ``easy repair of one node" algorithm for an encoded file $c=(c_1,\dots,c_n)$ with node failures. The generator matrix used for the codification is $G=\left[
                                     \begin{array}{ccc}
                                       G_1 & \cdots & G_n \\
                                     \end{array}
                                   \right]
 $ where its columns $G_i$ are elements of $\F_{2^k}$. 


\begin{algorithm}\label{alg}
 \KwData{$(c_1,\dots,c_n)$ and $G=\left[
                                     \begin{array}{ccc}
                                       G_1 & \cdots & G_n \\
                                     \end{array}
                                   \right]
 $. \\
 }
 \KwResult{$(c_{j_i}, c_{j_t},c_{j_i} + c_{j_t} , \mbox{Correctable})$. \\
  }

 $( j_1, \dots, j_{s} )=findlivenodes(c)$\;
 $s=length(findlivenodes(c))$\;
 $G^\ell= \{ G_{j_1}, \dots, G_{j_{s}} \}$\;  $i=1,\ t=2$ \;

 \While{ $[G_{j_i} + G_{j_t} \in G^\ell]$ AND $[i< s-1]$}{
  \eIf{$t=s$}{$ t=i+2, \ i=i+1$}{$t=t+1$}}
 \eIf{$i=s$}{$(c_{j_{i-1}}=0, c_{j_i}=0, c_{j_{i-1}} + c_{j_i}=0$, $\mbox{Correctable=FALSE})$}{
 $(c_{j_i} , c_{j_t}, c_{j_i} + c_{j_t},\mbox{Correctable=TRUE})$}

\end{algorithm}

Above, $findlivenodes(c)$ is a function that returns a vector $( j_1, \dots, j_{s} )$ of live nodes indices. The algorithm returns two live nodes $c_{j_i}$ and $c_{j_t}$ that repair an erased node $c_{j_i} + c_{j_t}$ and the Boolean variable  ``Correctable" that takes the values TRUE or FALSE; in case the erasure pattern is uncorrectable, the algorithm returns zero values and the statement ``FALSE"

\vspace{.2cm}

\begin{theorem}\label{th:01}
Let $C$ be an $(n,k)$ simplex code over $F_q$ with $S^e$ denoting the set of erased nodes. 
If $S^e$ corresponds to a correctable erasure pattern, then repeated application of Algorithm \ref{alg} recovers all erasures by easy repairs.
\end{theorem}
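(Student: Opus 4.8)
The plan is to argue by induction on the number of erased nodes $|S^e|$, using Lemma~\ref{exists} as the driving engine. A single run of Algorithm~\ref{alg} repairs exactly one erased node, and the goal is to show that after each such repair the remaining erasure pattern is still correctable, so that Lemma~\ref{exists} can be invoked afresh. Since each successful run strictly decreases $|S^e|$ by one, finitely many repeated applications will clear all erasures.

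First I would verify that one application of Algorithm~\ref{alg} does exactly what is claimed when $S^e$ is correctable. The loop scans the pairs $(j_i,j_t)$ of live-node indices in lexicographic order and tests whether $G_{j_i}+G_{j_t}\in G^\ell$. Here I would use the defining simplex property (exploited already in Lemma~\ref{lem:01}) that the sum of any two distinct columns of $G$ is again a column of $G$: hence $c_{j_i}+c_{j_t}=c_m$ for a unique node $c_m$, and $c_m$ is erased precisely when $G_{j_i}+G_{j_t}\notin G^\ell$. Thus the loop halts at the first pair of live nodes whose sum is an erased node, which is exactly an easy repair, and it returns \textbf{Correctable=TRUE}. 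By Lemma~\ref{exists}, whenever $S^e$ is correctable such a pair exists, so the exhaustive scan must encounter it and the \textbf{FALSE} branch ($i=s$) is never taken.

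Next I would establish the induction step proper: that repairing one erased node preserves correctability. Let $\hat G$ be the matrix of columns of $G$ at the live positions. By Lemma~\ref{le:aux}, correctability of $S^e$ is equivalent to $\hat G$ being right invertible, i.e.\ having rank $k$. After the easy repair the node $c_m$ becomes live, so the new matrix $\hat G'$ is obtained from $\hat G$ by adjoining one further column; its rank is therefore still $k$ and it remains right invertible. Hence the reduced pattern $S^e\setminus\{c_m\}$ is again correctable, and since the computed value of $c_m$ is the genuine codeword symbol, all later repairs that might use $c_m$ are valid.

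Combining these observations closes the induction: starting from a correctable $S^e$, each application of Algorithm~\ref{alg} returns a valid easy repair (Lemma~\ref{exists} together with the scan argument), the repair lowers $|S^e|$ by one while keeping the pattern correctable, and after $|S^e|$ applications no erased nodes remain. I expect the main obstacle to be essentially bookkeeping: matching the loop's index updates and its termination test to the existence guarantee of Lemma~\ref{exists}, so as to be certain that \textbf{Correctable=FALSE} is reported only for genuinely uncorrectable patterns and never spuriously for a correctable one.
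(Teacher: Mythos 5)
Your proposal is correct and takes essentially the same approach as the paper's proof: invoke Lemma~\ref{exists} to guarantee an easy repair exists, argue that the algorithm's scan finds it, observe that correctability is preserved once a node is repaired, and iterate until no erasures remain. Your write-up is in fact more detailed than the paper's, which simply asserts the algorithm's success and the preservation of correctability as obvious; your rank argument via Lemma~\ref{le:aux} and the analysis of the loop's termination fill in those steps.
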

\begin{proof}
Because of Lemma \ref{exists}, an easy repair situation exists. The algorithm is clearly defined in such a way that it will find this easy repair situation and carry out the repair. Once repaired, the erasure pattern is of course still correctable and we repeat over and over until all erasures are recovered.
\end{proof}

\vspace{.2cm}

\begin{example}
Consider the matrices $G$ and $H$ as defined in Example \ref{ex:01}. Suppose that we have a file $u\in \F_q^3$ to be stored in $7$ nodes, \emph{i.e.}, $u G= c=(c_1,\dots,c_7)\in \F_q^7$, and that erasures occur in nodes $c_1,c_2,c_4$ and $c_6$, \emph{i.e.}, $S_i^e= \{ 1,2,4,6\}$, $S_i^\ell= \{ 3,5,7\}$. Thus the pattern is correctable despite the fact that the number of erasures is outside of the erasure correcting capability. It now follows from Lemma~\ref{exists} that there exists an erased node that allows for easy repair. Indeed, $c_1=c_3 + c_5$ and in fact there exist several erased nodes that allow for easy repair, namely also $c_2= c_5 + c_7$ and $c_4= c_3 + c_7$. By Lemma~\ref{exists}, we can repair all nodes by easy repairs. Indeed, $c_2$ and $c_4$ already allow for easy repair and, once $c_2$ is repaired, we repair $c_6$ from $c_6=c_2+ c_3$.
\end{example}
\vspace{.2cm}
\begin{remark}
Note that in the previous example the node $c_6$ cannot be the first node to be easily repaired and we need to repair a different node first. However, when the number of erasures does not exceed the erasure correcting capability of the code, then \emph{any} erased node can be chosen to start the repair, thus allowing for parallelization of easy repairs.
\end{remark}
\vspace{.2cm}
\section{Conclusions}

In this paper, we have presented the family of non-binary simplex codes as a family which is highly suitable for efficient erasure coding in multiple-erasure settings within Networked Distributed Storage Systems. These non-binary codes are constructed using the generator matrix of binary simplex codes and hence inherit excellent locality properties even in the presence of large erasure patterns.
They allow for easy encoding, requiring only addition operations in $F_q$. We have shown that if the erasure pattern is correctable at all, then it is possible to repair each of the failed nodes at the cost of a simple addition of two live nodes. An algorithm for such an easy repair is provided. Further analysis to evaluate these simple codes in practical situations against existing code families is part of our ongoing and future work.


\begin{thebibliography}{99}

\bibitem{GlRoSm} H. Gluesing-Luerssen, J. Rosenthal, and R. Smarandache, Strongly MDS convolutional codes. IEEE Trans. Inform. Theory, 52(2):584--598, 2006.

\bibitem{Blaum} M. Blaum, J. Brady, J. Bruck, and J. Menon, "EVENODD: An efficient scheme for tolerating double disk failures in RAID architectures," IEEE Trans. Comput., vol. C-44, no.
2, pp. 192–202, Feb. 1995.

\bibitem{Wuala} www.wuala.com

\bibitem{Azure} C. Huang, H. Simitci, Y. Xu, A. Ogus, B. Calder, P. Gopalan,J. Lin, S. Yekhanin, "Erasure Coding in Windows Azure Storage".

    
\bibitem{Dimakis1} A. G. Dimakis, P. B. Godfrey, Y. Wu, M. Wainwright and K. Ramchandran, "Network Coding for Distributed Storage Systems" IEEE Transactions on Information Theory, Vol. 56, Issue 9, Sept. 2010.

\bibitem{Gaston} B. Gaston , P. Pujol, M. Villanueva, "Quasi-cyclic regenerating codes" IEEE Transactions on Information Theory (under revision), arXiv:1209.3977[cs.IT], 2013.

\bibitem{Dimakis2} A. Dimakis, K. Ramchandran, and Y. Wu, "A survey on network codes for distributed storage," IEEE Trans. Inf. Theory, vol. 99, pp. 1204–1216, 2011.

\bibitem{Dimakis3} D. Papailiopoulos and A. Dimakis, "Locally repairable codes," in Proc. of the IEEE ISIT, 2012.

\bibitem{Dimakis4} D. S. Papailiopoulos and A. G. Dimakis, "Storage codes with optimal repair locality," in Proceedings of the IEEE Intl. Symposium on Information Theory (ISIT), 2012.

\bibitem{Oggier1} F. Oggier, A. Datta, Self-repairing Homomorphic Codes for Distributed Storage Systems",The 30th IEEE International Conference on Computer Communications, INFOCOM 2011.
Extended version at http://arxiv.org/abs/1107.3129.

\bibitem{Oggier2} F. Oggier, A. Datta, Self-Repairing Codes for Distributed Storage - A Projective Geometric Construction, In IEEE Information Theory Workshop (ITW) 2011.

\bibitem{Gopalan} P. Gopalan, C. Huang, H. Simitci, and S. Yekhanin, "On the locality of codeword symbols," IEEE Transactions on Information Theory, vol. 58, pp. 6925–6934, Nov 2012.

\bibitem{KamPra} G. Kamath, N. Prakash, V. Lalitha, and P. Kumar, "Codes with local regeneration," arXiv preprint arXiv:1211.1932, 2012.


\bibitem{PraKam} N. Prakash, G. M. Kamath, V. Lalitha, and P. V. Kumar, "Optimal linear codes with a local-error-correction property," in Proc. IEEE Int. Symp. Inf. Theory (ISIT), Cambridge,
MA, Jul. 2012, pp. 2776–2780.

\bibitem{TamoBarg} I. Tamo and A. Barg, "A family of optimal locally recoverable codes," arXiv preprint arXiv:1311.3284, 2013.

\bibitem{Pamies}  L. Pamies-Juarez, H.D.L. Hollmann, F. Oggier, "Locally Repairable Codes with Multiple Repair Alternatives,", IEEE International Symposium on Information Theory (ISIT 2013).





\end{thebibliography}
\end{document}